\documentclass[%
 reprint,
 amsmath,amssymb,
 aps,
]{revtex4-1}
\usepackage{float}
\usepackage{amsthm,amsfonts,amsmath,mathtools}
\newtheorem{theorem}{Theorem}

\makeatletter
\let\oldtheequation\theequation
\renewcommand\tagform@[1]{\maketag@@@{\ignorespaces#1\unskip\@@italiccorr}}
\renewcommand\theequation{(\oldtheequation)}
\makeatother

\usepackage[colorlinks]{hyperref}

\usepackage{graphicx}
\usepackage{dcolumn}
\usepackage{bm}

\usepackage{subcaption}
\newcommand*{\affaddr}[1]{#1} 
\newcommand*{\affmark}[1][*]{\textsuperscript{#1}}

\usepackage{natbib}
\usepackage{filecontents}
\begin{filecontents*}{general.bib}
@article{Success,
	title = {Success criteria for quantum search on graphs},
	author = {Tulsi, Avatar},
	year = {2016},
	month = {05}		
}
@article{PhysRevA.93.022314,
  title = {Optimal state discrimination and unstructured search in nonlinear quantum mechanics},
  author = {Childs, Andrew M. and Young, Joshua},
  journal = {Phys. Rev. A},
  volume = {93},
  issue = {2},
  pages = {022314},
  numpages = {7},
  year = {2016},
  month = {Feb},
  publisher = {American Physical Society},
  doi = {10.1103/PhysRevA.93.022314},
  url = {https://link.aps.org/doi/10.1103/PhysRevA.93.022314}
}
@article{abrams1998nonlinear,
  title={Nonlinear quantum mechanics implies polynomial-time solution for NP-complete and\# P problems},
  author={Abrams, Daniel S and Lloyd, Seth},
  journal={Physical Review Letters},
  volume={81},
  number={18},
  pages={3992},
  year={1998},
  publisher={APS}
}
@article{grover1997quantum,
  title={Quantum mechanics helps in searching for a needle in a haystack},
  author={Grover, Lov K},
  journal={Physical review letters},
  volume={79},
  number={2},
  pages={325},
  year={1997},
  publisher={APS}
}
@article{bennett1997strengths,
  title={Strengths and weaknesses of quantum computing},
  author={Bennett, Charles H and Bernstein, Ethan and Brassard, Gilles and Vazirani, Umesh},
  journal={SIAM journal on Computing},
  volume={26},
  number={5},
  pages={1510--1523},
  year={1997},
  publisher={SIAM}
}
@article{meyer2013nonlinear,
  title={Nonlinear quantum search using the Gross--Pitaevskii equation},
  author={Meyer, David A and Wong, Thomas G},
  journal={New Journal of Physics},
  volume={15},
  number={6},
  pages={063014},
  year={2013},
  publisher={IOP Publishing}
}
@article{meyer2014quantum,
  title={Quantum search with general nonlinearities},
  author={Meyer, David A and Wong, Thomas G},
  journal={Physical Review A},
  volume={89},
  number={1},
  pages={012312},
  year={2014},
  publisher={APS}
}
@article{kahou2013quantum,
  title={Quantum search with interacting Bose-Einstein condensates},
  author={Kahou, Mahdi Ebrahimi and Feder, David L},
  journal={Physical Review A},
  volume={88},
  number={3},
  pages={032310},
  year={2013},
  publisher={APS}
}
@article{weinberg1989precision,
  title={Precision tests of quantum mechanics},
  author={Weinberg, Steven},
  journal={Physical Review Letters},
  volume={62},
  number={5},
  pages={485},
  year={1989},
  publisher={APS}
}
@article{knuth1976big,
	title={Big omicron and big omega and big theta},
	author={Knuth, Donald E},
	journal={ACM Sigact News},
	volume={8},
	number={2},
	pages={18--24},
	year={1976},
	publisher={ACM}
}
@inproceedings{grover1996fast,
	title={A fast quantum mechanical algorithm for database search},
	author={Grover, Lov K},
	booktitle={Proceedings of the twenty-eighth annual ACM symposium on Theory of computing},
	pages={212--219},
	year={1996},
	organization={ACM}
}
    @article{PhysRevLett.85.3045,
	title = {Experimental Demonstration of Fully Coherent Quantum Feedback},
	author = {Nelson, Richard J. and Weinstein, Yaakov and Cory, David and Lloyd, Seth},
	journal = {Phys. Rev. Lett.},
	volume = {85},
	issue = {14},
	pages = {3045--3048},
	numpages = {0},
	year = {2000},
	month = {Oct},
	publisher = {American Physical Society},
	doi = {10.1103/PhysRevLett.85.3045},
	url = {https://link.aps.org/doi/10.1103/PhysRevLett.85.3045}
} 
    @article{PhysRevA.62.022108,
	title = {Coherent quantum feedback},
	author = {Lloyd, Seth},
	journal = {Phys. Rev. A},
	volume = {62},
	issue = {2},
	pages = {022108},
	numpages = {12},
	year = {2000},
	month = {Jul},
	publisher = {American Physical Society},
	doi = {10.1103/PhysRevA.62.022108},
	url = {https://link.aps.org/doi/10.1103/PhysRevA.62.022108}
}
    @article{grimsmo2015time,
	title={Time-delayed quantum feedback control},
	author={Grimsmo, Arne L},
	journal={Physical review letters},
	volume={115},
	number={6},
	pages={060402},
	year={2015},
	publisher={APS}
}
    @article{wang2015quantum,
	title={Quantum feedback control of linear stochastic systems with feedback-loop time delays},
	author={Wang, Shi and James, Matthew R},
	journal={Automatica},
	volume={52},
	pages={277--282},
	year={2015},
	publisher={Elsevier}
}
@article{PhysRevLett.63.1031,
  title = {Test of the linearity of quantum mechanics by rf spectroscopy of the $^{9}\mathrm{Be}^{+}$ ground state},
  author = {Bollinger, J. J. and Heinzen, D. J. and Itano, Wayne M. and Gilbert, S. L. and Wineland, D. J.},
  journal = {Phys. Rev. Lett.},
  volume = {63},
  issue = {10},
  pages = {1031--1034},
  numpages = {0},
  year = {1989},
  month = {Sep},
  publisher = {American Physical Society},
  doi = {10.1103/PhysRevLett.63.1031},
  url = {https://link.aps.org/doi/10.1103/PhysRevLett.63.1031}
}
@article{weinberg1989precision,
  title={Precision tests of quantum mechanics},
  author={Weinberg, Steven},
  journal={Physical Review Letters},
  volume={62},
  number={5},
  pages={485},
  year={1989},
  publisher={APS}
}
@inproceedings{jordan2009quantum,
  title={Why quantum dynamics is linear},
  author={Jordan, Thomas F},
  booktitle={Journal of Physics: Conference Series},
  volume={196},
  number={1},
  pages={012010},
  year={2009},
  organization={IOP Publishing}
}
@article{sinha2010ruling,
  title={Ruling out multi-order interference in quantum mechanics},
  author={Sinha, Urbasi and Couteau, Christophe and Jennewein, Thomas and Laflamme, Raymond and Weihs, Gregor},
  journal={Science},
  volume={329},
  number={5990},
  pages={418--421},
  year={2010},
  publisher={American Association for the Advancement of Science}
}
@inproceedings{eilbeck2003discrete,
  title={THE DISCRETE NONLINEAR SCHR{\"o}DINGER},
  author={Eilbeck, J Chris and Johansson, Magnus},
  booktitle={Proceedings of the Third Conference: Localization \& Energy Transfer in Nonlinear Systems: June 17-21 2002, San Lorenzo de El Escorial, Madrid},
  pages={44},
  year={2003},
  organization={World Scientific}
}
@article{trombettoni2001discrete,
  title={Discrete solitons and breathers with dilute Bose-Einstein condensates},
  author={Trombettoni, Andrea and Smerzi, Augusto},
  journal={Physical Review Letters},
  volume={86},
  number={11},
  pages={2353},
  year={2001},
  publisher={APS}
}
@article{alfimov2002wannier,
  title={Wannier functions analysis of the nonlinear Schr{\"o}dinger equation with a periodic potential},
  author={Alfimov, GL and Kevrekidis, PG and Konotop, VV and Salerno, M},
  journal={Physical Review E},
  volume={66},
  number={4},
  pages={046608},
  year={2002},
  publisher={APS}
}
@article{abdullaev2014compacton,
  title={Compacton matter waves in binary Bose gases under strong nonlinear management},
  author={Abdullaev, F Kh and Hadi, MSA and Salerno, M and Umarov, B},
  journal={Physical Review A},
  volume={90},
  number={6},
  pages={063637},
  year={2014},
  publisher={APS}
}
@article{cataliotti2001science,
  title={Science 293 843 Adhikari SK 2005},
  author={Cataliotti, FS and Burger, S and Fort, C and Maddaloni, P and Minardi, F and Trombettoni, A and Smerzi, A and Inguscio, M},
  journal={Phys. Rev. A},
  volume={72},
  pages={013619},
  year={2001}
}
@article{christodoulides1988discrete,
  title={Discrete self-focusing in nonlinear arrays of coupled waveguides},
  author={Christodoulides, DN and Joseph, RI},
  journal={Optics letters},
  volume={13},
  number={9},
  pages={794--796},
  year={1988},
  publisher={Optical Society of America}
}
@article{eisenberg1998discrete,
  title={Discrete spatial optical solitons in waveguide arrays},
  author={Eisenberg, HS and Silberberg, Ya and Morandotti, R and Boyd, AR and Aitchison, JS},
  journal={Physical Review Letters},
  volume={81},
  number={16},
  pages={3383},
  year={1998},
  publisher={APS}
}
@article{christodoulides2003discretizing,
  title={Discretizing light behaviour in linear and nonlinear waveguide lattices},
  author={Christodoulides, Demetrios N and Lederer, Falk and Silberberg, Yaron},
  journal={Nature},
  volume={424},
  number={6950},
  pages={817},
  year={2003},
  publisher={Nature Publishing Group}
}
@inproceedings{fleischer2003observation,
  title={Observation of two-dimensional discrete solitons in optically-induced nonlinear photonic lattices},
  author={Fleischer, Jason W and Segev, Mordechai and Efremidis, Nikos K and Christodoulides, Demetrios N},
  booktitle={Quantum Electronics and Laser Science Conference},
  pages={QThK1},
  year={2003},
  organization={Optical Society of America}
}
@article{lederer2008discrete,
  title={Discrete solitons in optics},
  author={Lederer, Falk and Stegeman, George I and Christodoulides, Demetri N and Assanto, Gaetano and Segev, Moti and Silberberg, Yaron},
  journal={Physics Reports},
  volume={463},
  number={1},
  pages={1--126},
  year={2008},
  publisher={Elsevier}
}
@article{molina2006discrete,
  title={Discrete solitons and nonlinear surface modes in semi-infinite waveguide arrays},
  author={Molina, Mario I and Vicencio, Rodrigo A and Kivshar, Yuri S},
  journal={Optics letters},
  volume={31},
  number={11},
  pages={1693--1695},
  year={2006},
  publisher={Optical Society of America}
}
@article{de2012study,
  title={Study of diffusion of wave packets in a square lattice under external fields along the discrete nonlinear Schr{\"o}dinger equation},
  author={de Brito, PE and Nazareno, HN},
  journal={Physica B: Condensed Matter},
  volume={407},
  number={18},
  pages={3910--3915},
  year={2012},
  publisher={Elsevier}
}
@article{meyer2015connectivity,
	title={Connectivity is a poor indicator of fast quantum search},
	author={Meyer, David A and Wong, Thomas G},
	journal={Physical review letters},
	volume={114},
	number={11},
	pages={110503},
	year={2015},
	publisher={APS}
}
@article{1046335619971001,
Abstract = {Discusses strengths and weaknesses of quantum computing. Comparison between quantum computers and classical probabilistic computers; Shor's result that factoring and the extraction of discrete logarithms are both solvable; Quantum Turing machine.},
Author = {Bennett, Charles H. and Bernstein, Ethan and Brassard, Gilles and Vazirani, Umesh},
ISSN = {00975397},
Journal = {SIAM Journal on Computing},
Keywords = {Quantum computers, Turing machines},
Number = {5},
Pages = {1510},
Title = {STRENGTHS AND WEAKNESSES OF QUANTUM COMPUTING.},
Volume = {26},
URL = {http://search.ebscohost.com.ezproxy.library.uwa.edu.au/login.aspx?direct=true&db=iih&AN=10463356&site=ehost-live},
Year = {1997},
}
@article{boyer1996tight,
  title={Tight bounds on quantum searching},
  author={Boyer, Michel and Brassard, Gilles and H{\o}yer, Peter and Tapp, Alain},
  journal={arXiv preprint quant-ph/9605034},
  year={1996}
}
\end{filecontents*}
\nocite{*}

\begin{document}

\preprint{APS/123-QED}

\title{Controlled Quantum Search}

\author{%
K. de Lacy\affmark[1] and L. Noakes\affmark[1], J. Twamley\affmark[2], and J.B. Wang\affmark[3]
\\ \vspace{0.2cm}
\affaddr{\affmark[1]\textit{School of Mathematics and Statistics, University of Western Australia, WA 6009, Perth, Australia}
}\\
\affaddr{\affmark[2]\textit{ARC Centre for Engineered Quantum Systems, Department of
Physics and Astronomy, Macquarie University, NSW 2109, Australia}}\\
\affaddr{\affmark[3]\textit{School of Physics and Astrophysics, University of Western Australia, WA 6009, Perth, Australia}}\\
}

\begin{abstract}
Quantum searching for one of \(N\) marked items in an unsorted database of \(n\) items is solved in \(\mathcal{O}(\sqrt{n/N})\) steps using Grover's algorithm. Using nonlinear quantum dynamics with a Gross-Pitaevskii type quadratic nonlinearity, Childs and Young discovered an unstructured quantum search algorithm with a complexity \(\mathcal{O}( \min \{ 1/g \, \log (g n), \sqrt{n} \} ) \), which can be used to find a marked item after \(o(\log(n))\) repetitions, where \(g\) is the nonlinearity strength \cite{PhysRevA.93.022314}. In this work we develop a structured search on a complete graph using a time dependent nonlinearity which obtains one of the \(N\) marked items with certainty.  The protocol has runtime \(\mathcal{O}((N^{\perp} - N) / (G \sqrt{N N^{\perp}}) )\) if \(N^{\perp} > N\), where \(N^{\perp}\) denotes the number of unmarked items and \(G\) is related to the time dependent nonlinearity. If \(N^{\perp} \leq N\), we obtain a runtime \(\mathcal{O}( 1 )\). We also extend the analysis to a quantum search on general symmetric graphs and can greatly simplify the resulting equations when the graph diameter is less than \(5\).
\end{abstract}

\pacs{Valid PACS appear here}
\maketitle


\section{\label{sec:level1}INTRODUCTION}

 Using linear quantum mechanics the search problem can be solved using Grover's algorithm \cite{grover1997quantum} in \(\mathcal{O}(\sqrt{n/N})\) steps, where \(n\) denotes the number of search items and \(N\) denotes the number of marked items. Grover's search is asymptotically optimal in the linear quantum domain \cite{bennett1997strengths}.

 The linearity of quantum mechanics plays a subtle but profound role in the design and performance of quantum algorithms. It was shown by Abrams and Lloyd \cite{abrams1998nonlinear} that nonlinear quantum mechanics has the potential to solve NP-complete (nondeterministic polynomial time) and \(\# P\) problems (including oracle problems) in polynomial time.

 Meyer and Wong \cite{meyer2013nonlinear}, and Kahou and Fedor \cite{kahou2013quantum} looked at using the Gross-Pitaevskii dynamics of interacting Bose Einstein condensates to perform Grover's search and found a runtime which scales as \(\mathcal{O}( \min \{ \sqrt{n/g} , \sqrt{n} \} )\), where \(g\) denotes the nonlinearity strength. Meyer and Wong then considered the more general type of nonlinearity \(\sim \, f( \| \phi \|^2 ) \), where \(f : \mathbb{R} \rightarrow \mathbb{R}\) is smooth \cite{meyer2014quantum}. More recently Childs and Young \cite{PhysRevA.93.022314}, found a nonlinear protocol with a runtime scaling as \(\mathcal{O}  (\min \{ 1 / g \log (g n) ; \sqrt{n} \} ) \), which is exponentially faster than previous results \cite{meyer2013nonlinear}. Furthermore this nonlinear search can be repeated \(\text{o}(\log(n))\) times to find the position of a marked item. In all these works however, the marking of the item \( | j^* \rangle \), is performed via the linear part of the dynamics through a term in the Hamiltonian  \(\sim - | j^* \rangle\langle j^* |  \). In our work we consider the case where the marking is encoded into the degree of the nonlinearity and the nonlinearity itself of each item. We consider the case of quantum nonlinear dynamics on a complete graph of \(n\) sites where the initial site is a uniform superposition up to a phase on the marked site, namely \( | \phi (t=0) \rangle = \mathrm{i} \sum_{j = i^* } | j \rangle + \sum_{j \neq i^* } |j \rangle \), ignoring normalisation, where \(i^*\) denotes marked items. We apply a time dependent modulation of the nonlinear strength \(u_k (t)\) to the \(k\)'th state. Although we use a model where both the nonlinear strength and the nonlinearity may depend upon the state, only one is required to depend explicitly upon the states without impacting the end time. This implies our protocol will have the same runtime when governed by linear or nonlinear quantum mechanics. Hence for \(N \ll n\) we obtain the same complexity as Grover, which is asymptotically optimal in the linear case.

 We show analytically that with a suitable form for the nonlinearity strength of the \(k\)'th item, \(u_k (t)\), the protocol yields complete localisation of the quantum dynamics onto the marked states in time \(\mathcal{O}((N^{\perp} - N) / (G \sqrt{N N^{\perp}}) )\), for \(N^{\perp} > N\) and time \( \mathcal{O}( 1 )\) when \(N^{\perp} \leq  N\). The nonlinearity of marked and unmarked items is algebraically related to \(G\) in section \ref{complete graph}. 

We interpret the database search problem as a search on a graph governed by continuous time quantum dynamics, arriving at the Discrete Nonlinear Schr\"odinger Equation (DNLSE). By expressing the coefficients of each state in polar form we can decompose quantum states over the nodes in the graph into equivalence classes depending on the connectivity of the nodes representing unmarked and marked items. For the case of the complete graph this reduction greatly simplifies the description of the dynamics. On this graph we are able to develop a new continuous time algorithm which obtains a marked item with certainty. Furthermore, the error associated with measurement becomes arbitrarily small, unlike the previous work by Meyer and Wong \cite{meyer2013nonlinear} where the peak probability becomes increasingly difficult to obtain.

\section{Discrete Nonlinear Schr\"odinger Equation}
Index the \(N\) marked states by \(i^*\) and let the coefficient of state \(j \in \{0,...,n-1\}\) be \(x_j = r_j \mathrm{e}^{\mathrm{i} \theta_j}\), where \(r_j : [0,t_f] \to [0,1] \), \(\theta_j : [0,t_f] \to ( -\pi , \pi ] \), \(t_f \in \mathbb{R}^+\) and \(\mathrm{i}=\sqrt{-1}\). Let the norm be the natural norm over the complex numbers, \(\|V\|^2 \coloneqq V \, \bar{V}\), where the bar denotes conjugation and \(V \in \mathbb{C}\). The norm squared of the \(i\)'th state's coefficient, \(r_i^2\), is the probability of measuring state \(i\). Therefore performing the search equates to evolving the system to maximise \(r_{i^*}^2\). The dynamics of the coefficients are governed by the discrete nonlinear Schr\"odinger equation (DNLSE)
    \begin{align}\label{1dnlse}
	\mathrm{i} \, \dot x_j = \gamma \mathrm{L}_{jk} x_k + u_j \|x_j\|^{2 \zeta_j} x_j \, ,
	\end{align}
	where \(\gamma = G/(n-2N) \) for some constant \(G\). Using \(x_j = r_j \mathrm{e}^{\mathrm{i} \theta_j}\) and splitting equation \eqref{1dnlse} into its real and imaginary components gives
	\begin{align}  
	\dot r_j & = \gamma \mathrm{L}_{jk} r_k  \sin( \theta_k-\theta_j) \, , \label{DNLSEr} \\
	\dot \theta_j & = - \gamma \mathrm{L}_{jk} \frac{r_k}{r_j}  \cos( \theta_k-\theta_j)  - u_j r_j^{2 \zeta_j} \, , \label{DNLSEt}
	\end{align}
	where the index \(k\) is summed from \(0\) to \(n-1\) in equations \eqref{1dnlse}, \eqref{DNLSEr} and \eqref{DNLSEt}. A dot over a function denotes a derivative with respect to time. The control function and nonlinearity of the \(j\)'th state is \(u_j : [0,t_f] \to \mathbb{R}\) and \(\zeta_j \in \mathbb{Z}\) respectively. We assume that both of these can be manipulated at will and require at least one of \(\zeta_j\) and \(u_j\) to be different for marked and unmarked states. Furthermore we will induce conditions onto \(\zeta_j\) and \(u_j\) with respect to \(j\) so the graph symmetry is preserved in the DNLSE.
    
	   The Laplacian, \(\mathrm{L}\), for an arbitrary graph is formed by taking the graph's adjacency matrix and subtracting the number of connections of the \(j\)'th node from the \(j\)'th element along the diagonal. The number \(\mathrm{L}_{ij}\) denotes the element in the \(i\)'th row and \(j\)'th column of the Laplacian.
       
       Initially all states are prepared with coefficients \(r_j = 1/\sqrt{n}\) for all \(j \in \{0,...,n-1\}\) and \(\theta_i = \theta_j\) or \( \theta_{j^*} + \pi/2 \) for all \(i\), where \(j\) and \(j^*\) indicate unmarked  and marked states respectively. This initial state can be prepared using a controlled rotation on an equal superposition with a linear quantum computer using \(\mathcal{O}(\log(n) ) \) elementary quantum gates.
      
	On a general graph, finding the optimal control curves \(u_j\) to maximise \(r_{i^*}^2\) results in a boundary value problem. We provide a direct numerical method to solve this and for diameter \(3\) and \(4\) graphs, the boundary value problem can be turned into an initial value problem. For complete graphs we obtain analytic expressions for the controls and end time.
		\begin{theorem}\label{thrm:1}
		The DNLSE must conserve the probability of measuring any state, hence
		\begin{align}\label{forpf1}
		\sum_{j=1}^{n} r_j^2 = 1 \, ,
		\end{align}
		when normalised.
	\end{theorem}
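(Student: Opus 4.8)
The plan is to show that $\sum_j r_j^2$ is a constant of the motion and then pin down the constant using the prepared initial state. I would start from the real amplitude equation \eqref{DNLSEr}, differentiate the candidate conserved quantity, and obtain
\begin{align}
\frac{d}{dt}\sum_j r_j^2 \;=\; 2\sum_j r_j \dot r_j \;=\; 2\gamma \sum_{j,k} \mathrm{L}_{jk}\, r_j r_k \sin(\theta_k - \theta_j)\, ,
\end{align}
with $k$ summed over all nodes exactly as in \eqref{DNLSEr}.

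The key step is a parity argument on this double sum. The Laplacian $\mathrm{L}$ of an undirected graph is real and symmetric, so $\mathrm{L}_{jk} = \mathrm{L}_{kj}$, whereas the factor $r_j r_k \sin(\theta_k - \theta_j)$ is antisymmetric under $j \leftrightarrow k$ and vanishes on the diagonal $j = k$ since $\sin 0 = 0$; hence each pair $(j,k)$ cancels against $(k,j)$ and the sum is zero, giving $\frac{d}{dt}\sum_j r_j^2 = 0$. The point worth stressing is that the nonlinear term $u_j \|x_j\|^{2\zeta_j} x_j$ never enters this calculation: by \eqref{DNLSEt} it feeds only into $\dot\theta_j$, not into $\dot r_j$, so norm conservation survives the nonlinearity. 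Equivalently, one can work directly with \eqref{1dnlse}: $\frac{d}{dt}\sum_j \|x_j\|^2 = 2\sum_j \Re(\bar x_j \dot x_j)$, where the nonlinear contribution is purely imaginary because $u_j$ and $\|x_j\|$ are real, and $\sum_{j,k}\mathrm{L}_{jk}\bar x_j x_k$ is real because $\mathrm{L}$ is real symmetric, so the right-hand side vanishes.

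Finally I would evaluate the conserved quantity on the prepared state: with $r_j = 1/\sqrt n$ for every $j$ one has $\sum_j r_j^2 = n\cdot(1/n) = 1$, and since the quantity does not change in time it equals $1$ throughout $[0,t_f]$, which is \eqref{forpf1}. I do not anticipate a real obstacle here; the only subtlety to make explicit is that the DNLSE nonlinearity has the "gauge" form of a real-valued multiple of $x_j$ itself, contributing only a phase — a generic nonlinear term would instead spoil the conservation law.
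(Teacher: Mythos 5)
Your proof is correct and follows essentially the same route as the paper: differentiate $\sum_j r_j^2$, substitute \eqref{DNLSEr}, and use the symmetry $\mathrm{L}_{jk}=\mathrm{L}_{kj}$ against the antisymmetry of $r_j r_k \sin(\theta_k-\theta_j)$ to conclude the double sum vanishes. You merely spell out the cancellation and the initial-state normalisation more explicitly than the paper does, which is fine.
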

	\begin{proof}
		Take the derivative of the left hand side of \eqref{forpf1} with respect to time and substitute equation \eqref{DNLSEr}. To remain physical, the graph must be undirected so \(\mathrm{L}_{jk} = \mathrm{L}_{kj}\). Hence,
		\begin{align}
		\sum_{j=1}^{n} r_j \dot{r}_j  &= \sum_{j,k=1}^{n} \gamma \mathrm{L}_{jk} r_k r_j  \sin( \theta_k-\theta_j) =  0 \, .
		\end{align}
		Integrating this gives equation \eqref{forpf1} under the assumption that the system is normalised.
	\end{proof}
\section*{Reducing the DNLSE via Graph Symmetry}
A reduction based upon symmetry can be performed to simplify the DNLSE. Let all \(n\) nodes of the graph form the set denoted by \(\mathcal{N}\). The set \(\mathcal{N}\) is isomorphic to the set of states labelled by \(\{0,1,..., n-1\}\). The bijective mapping \(\phi: \mathcal{N} \to \{0,1,..., n-1\}\) uniquely identifies each node with a state.
    
    The distance \(d(a,b)\) between two nodes \(a,b \in \mathcal{N} \) is the minimum number of edges in any path connecting \(a\) to \(b\). Two nodes \(a,b\in \mathcal{N}\) are said to be equivalent, \(a \sim b \), if \(\phi(a)\) and \(\phi(b)\) are labels for both marked or both unmarked states, and there exist elements \(c_1,c_2 \in [e]\), where \(d(a,c_1)=d(b,c_2)\) for all \(e \in \mathcal{N}\). Furthermore the set \([e]\), for \(e \in \mathcal{N} \) is defined as \(    [e] \coloneqq \{ c \in \mathcal{N} \, |  \, c \sim e  \} \, , \) called the equivalence class of \(e\). When all nodes in each equivalence class are given the same nonlinearity, then for \(a,b \in [e]\), with \(e\in \mathcal{N} \), the coefficients of states labelled by \(\phi(a) \) and \(\phi(b)\) are equal. Hence, the DNSE can be written using coefficients of one state from each equivalence class under the mapping \(\phi\). Call the process of writing an equation in terms of single elements of equivalence classes a reduction.
    
    \section{Complete Graph} \label{complete graph}
    A complete graph is a graph with every node connected to every other by a unique edge. On a complete graph any state can be directly transformed into any other, hence this is the least restrictive graph possible. To preserve the symmetry of a complete graph, let all marked states have the same nonlinearity, \(\zeta_{*}\), and all unmarked states have the same nonlinearity, \(\zeta\), where \(\zeta_{*} = \zeta\) is allowed.

    For the complete graph, there are only two equivalence classes under our equivalence relation, namely the set of nodes corresponding to marked states and the set of all nodes corresponding to unmarked states. Therefore the reduction process results in a single node representing a marked state, connected to a single node representing an unmarked state.

    If \(N=n\) there is certainty of measuring a marked state. For \(N<n\) marked states, the reduction can be written in terms of the constraints: \(r_{i} = r_{j}, \, \theta_{i} = \theta_{j}\) where \( i\) and \(j\) index marked states, and \(r_{i} = r_{j} , \, \theta_{i} = \theta_{j} \) where \(i\) and \(j\) index unmarked states. The Laplacian for an undirected, complete graph of \(n\) nodes is, 
	\begin{align}
		\mathrm{L} = \left( \begin{array}{ccccc}
		1-n & 1 & 1 & \dots & 1 \\
		1 & 1-n & 1 & \dots & 1 \\
		1 & 1 & 1-n & \dots & 1 \\
		\vdots & \vdots & \vdots & \ddots & \vdots \\
		1 & 1 & 1 & \dots & 1-n
		\end{array} \right) \, .
	\end{align}
	Simplifying the DNLSE in equations \eqref{DNLSEr} and \eqref{DNLSEt} by performing a reduction gives
	\begin{align*}
	\dot r_{*} & = \gamma	(n-N) r \sin( \theta - \theta_{*}) \, ,\\
	\dot r & = \gamma 	N \, r_{*}  \sin(  \theta_{*} - \theta ) \, ,\\
	\dot \theta_{*} & = 	- \gamma ( N-n+ (n-N) \frac{r}{r_{*}}  \cos( \theta-\theta_{*}) ) - u_* r_{*}^{2 \zeta_{*}} \, , \\
	\dot \theta & = - \gamma N \left( \frac{r_{*}}{r}  \cos( \theta_{*}-\theta) +1 \right) - u r^{2 \zeta} \, ,
	\end{align*}
    where \(r\) and \(\theta\) describe the radial and angular components of the coefficient of any unmarked state and \(r_*\) and \(\theta_*\) denote the radial and angular components of the coefficient of any marked state. Similarly all controls for the marked states are denoted \(u\) and all controls for the unmarked states are \(u_*\).

\section*{Controlled Quantum Search on a Complete Graph}
Theorem \eqref{thrm:1} states that the total probability is conserved, which can be rearranged to give
    \begin{align}\label{probcons}
		r = \sqrt{\frac{ 1-Nr_{*}^2}{n-N}}  \, .
	\end{align}
	Therefore the DNLSE can be written without \(r\). Only \(\Theta = \theta - \theta_{*} \) is found in the equation for \(\dot{r}_{*}\), not \(\theta_{*}\) and \(\theta\) separately. Hence the states can be contracted
	\begin{align}\label{10}
		& \dot r_{*}  =  \frac{g}{n-2N}	(n-N) r \sin( \Theta) \, , \\
		& \begin{array}{r}
		\dot \Theta = \displaystyle\frac{g}{n-2N} ((n-N) \, \frac{r}{r_{*}} -N \,  \frac{r_{*}}{r}  )\cos( \Theta )  \\   - g - u r^{2 \zeta}  + u_* r_{*}^{2 \zeta_{*}}  \, .
		\end{array} \label{11}
	\end{align}
    The desired dynamics is for \(r_{*}^2\) to increase as quickly as possible. Therefore the magnitude of \(r \sin(\Theta)\) should be maximised, hence \( \sin(\Theta) = 1 \equiv \Theta = \pi/2 + 2 C_1 \pi\), where \(C_1\) can be set to zero without loss of generality. The initial state, constructed earlier, satisfies this optimality constraint. However, to remain optimal we require \( \Theta =  \pi/2 \) for all time. This turns the differential equation for \(\Theta\) into an algebraic equation that provides a necessary and sufficient condition for the controls to maximise the probability of measuring a marked state in minimal time,
    \begin{align}\label{control}
	u_* r_{*}^{2 \zeta_{*}} - u r^{2 \zeta} = g \, ,
	\end{align}
    where the radial components are known explicitly by equations \eqref{probcons} and \eqref{good}. The differential equation for the radial component is \(\dot{r}_{*}  = (n-N) r\). Integrating this and using the initial condition \(r_{*}(0) = 1/\sqrt{n}\) gives
	\begin{align}\label{good}
		r_{*} = \frac{1}{\sqrt{N}} \, \sin \left( \frac{g \sqrt{N(n-N)} \, t}{n-2N} 
         + \sin^{-1} \left(  \sqrt{\frac{N}{n}}  \right)
        \right)
		\, .
	\end{align}
	The accumulated probability of all marked states is \(Nr_{*}^2\). The shape of this curve is the square of a sine function. In Meyer and Wong's work \cite{meyer2013nonlinear} on solving structured search problems via nonlinear quantum mechanics, they obtain peaks which become arbitrarily narrow, and therefore arbitrarily difficult to measure. In our scheme, the ability to measure a marked item with certainty becomes easier as \(n\) increases because the neighbourhood about \(Nr_{*}^2 = 1\) becomes flatter. Hence the error associated with measurement is essentially negligible for large \(n\). Two plots of the accumulated probability in figure \ref{fig:probability} depict the probability of measuring a marked state as a function of time, for \(n=3\) and \(n=10\), with one marked state and \(g=1\).

	\begin{figure}[H]
		\centering
		\begin{minipage}{0.4\textwidth}
			\centering
			\includegraphics[width=0.9\linewidth]{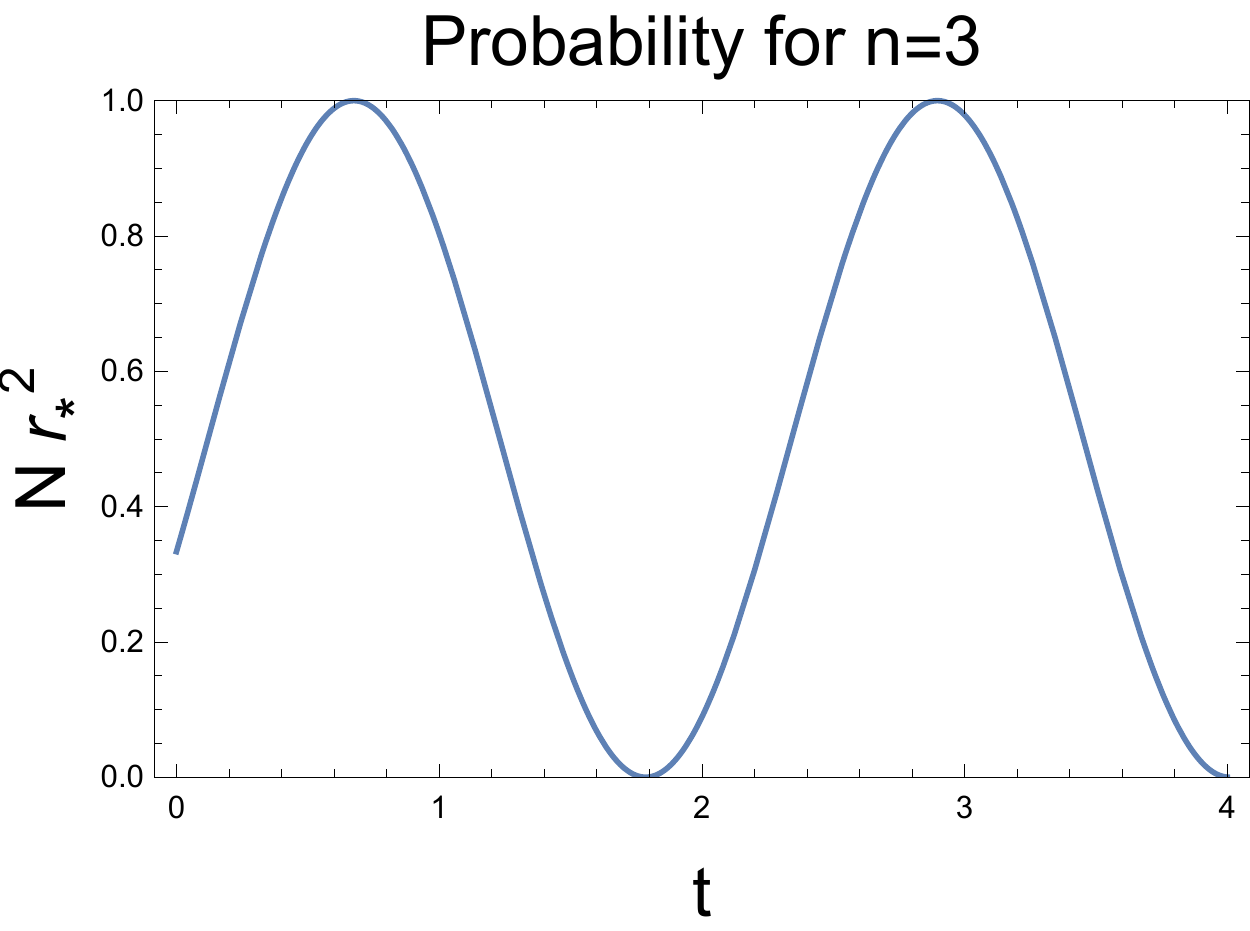}
			\subcaption{}
			\label{fig:Pn3}
		\end{minipage}
		\begin{minipage}{0.4\textwidth}
			\centering
			\includegraphics[width=0.9\linewidth]{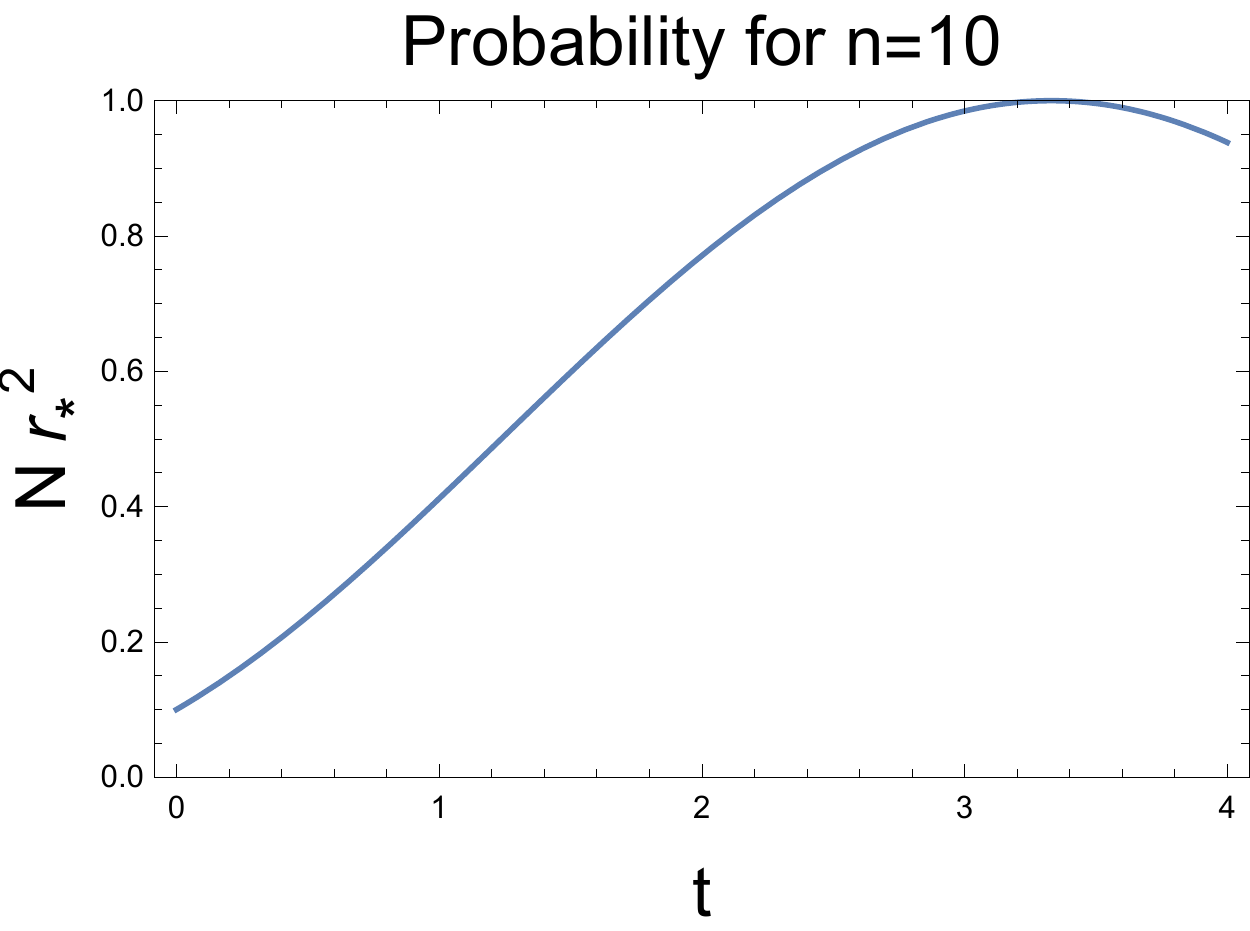}
			\subcaption{}
			\label{fig:Pn10}
		\end{minipage}
		\caption{Each subfigure depicts the probability of measuring a marked state \(Nr_{*}^2\) with respect to time, where \(r_{*}^2\) is determined by equation \ref{good}. Both subfigures have \(g=1\), \(N=1\) but varying \(n\). This variation changes the end time and causes the curve to become flatter around the maximum, hence measurement of the maximum incurs less error as \(n\) increases.} \label{fig:probability}
	\end{figure}

	The terminal condition reads \(0 = \dot{r}_{*}(t_f) r_{*}(t_f)\). As we seek a maximum this condition becomes \(0 = \dot{r}_{*}(t_f)\), solving this for \(t_f\) provides
	\begin{align}\label{end}
	t_f = \frac{n-2N}{g} \frac{ \cos^{-1} \Big( \sqrt{\frac{N}{n}} \Big)}{\sqrt{N(n-N)}} = \mathcal{O}\bigg(	\frac{n-2N}{g \sqrt{N(n-N)}}	\bigg) \, ,
	\end{align}
	using the big-O convention \cite{knuth1976big}. Note that the maximality condition \(0 = \dot{r}_{*}(t_f)\) is equivalent to \(r(t_f) = 0\), which implies that there is zero probability of measuring an unmarked node at time \(t_f\). 
    
    
     
     When \(N^{\perp} \leq N\) additional unmarked nodes can be implemented so the number of unmarked and marked nodes is equal. However, this assumes we know the exact number of marked nodes. In this case it is optimal to set the controls to zero, returning to linear quantum mechanics. The complexity in this case is the same as Grover's search and the expected time classically, namely \(\mathcal{O}(1)\) \cite{Success}.
     

     

    On a complete graph we have proven the nonlinearities \(\zeta \) and \(\zeta_*\) affect the control and not the optimal convergence rate. Hence these can be chosen to simplify the control. Note that the nonlinearity is not an integral part of the protocol on a complete graph, hence if \(\zeta = \zeta_* = 0\) we obtain a linear search algorithm with the same convergence rate. 
    
    Define the error \( E \coloneqq 1-Nr_*^2(t_f) \) as the probability of measuring an unmarked state at time \(t_f\) given by equation \eqref{end}. We assume this error only results from the inability to reconstruct the control perfectly in a physical system. Given \(N=1\) and \(\zeta_* = \zeta = 0\), then we could choose controls \(u=0\) and \(u_* = g\). Then assume the control functions are simulated to error \(\nu\) and \(\nu_*\) such that, \(u= \nu \) and \(u_* = g + \nu_* \) for constant \(\nu_*, \nu \in \mathbb{R}\). Then the error decreases as the number of states increases as per figure \eqref{fig:er}.
	\begin{figure}[H]
    \centering
	\includegraphics[width=.45\textwidth]{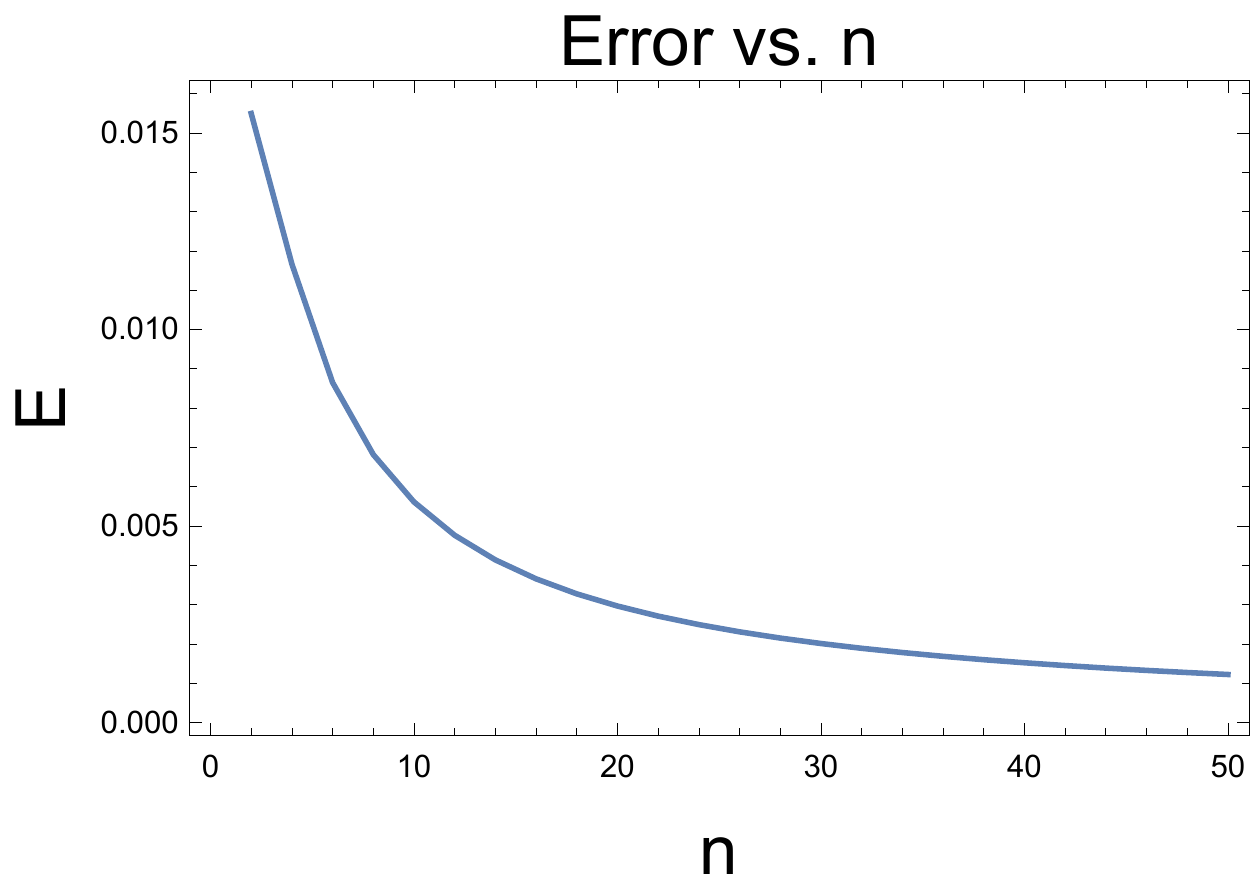}
    \caption{The error at time \(t_f\) as a function of the total number of states. It is assumed the control is incorrectly simulated such that \(\nu_* = \nu = 0.5\). The simulation has one marked state and \(n-1\) unmarked states.}
    \label{fig:er}
    \end{figure}

	\section*{Symmetric Graphs}
    Let there be one marked state, \(N=1\), and consider any symmetric graph \(S\). More precisely, \(S\) is edge and vertex transitive. Let \(d \in \mathbb{N}\) be an integer denoting the diameter of the graph. 

	Call the set of nodes with distance \(i\) to the node representing the marked state the \(i\)'th shell. Give every node in the same shell the same nonlinearity. The graph \(S\) can be fully described by: 
    \begin{enumerate}
    \item Its diameter \(d\).
    \item The number of edges from a node on one shell to the next. The number of edges for a node on shell \(i\) to shell \(i+1\) is denoted \(c_i\), where \(i=0,1,2, ... , d-1\).
    \item The number of nodes on each shell. The \(i\)'th shell has \(n_i\) nodes, where \(i=0,1,2, ... , d\). The index \(0\) denotes the node representing the marked state, hence \(n_0=1\).
    \end{enumerate}
    There are particular relations between these parameters and they cannot be chosen arbitrarily. Furthermore the number of connections from a node in disk \(i+1\) to one on disk \(i\) is \(c_i n_i/ n_{i+1} \). The value \(c_0\) denotes the number of edges all other nodes must have to ensure the symmetry is preserved. Therefore the number of edges from a node on the \(i\)'th shell to other nodes on the \(i\)'th shell is \(c_0 - c_{i-1} n_{i-1}/n_{i} - c_{i} \). Upon performing a reduction, each shell forms an equivalence class. Hence the reduction results in one node from each shell. Let \(i\) denote the index of a node in the \(i\)'th shell. Then the DNLSE reads
	\begin{align*} 
	\dot r_0 & = \gamma c_0 r_1  \sin( \theta_1-\theta_0)   \\
	\dot r_j & = \gamma  \left( \frac{c_{j-1} n_{j-1}}{n_{j}} r_{j-1}  \sin( \theta_{j-1}-\theta_j) 
	+c_j r_{j+1}  \sin( \theta_{j+1}-\theta_j) \right) \\
	\dot r_d & = \gamma \frac{c_{d-1} n_{d-1}}{n_{d}} r_{d-1}  \sin( \theta_{d-1}-\theta_d)  \\
	\dot \theta_0 & =\gamma \left( c_0 - c_0 \frac{r_1}{r_0}  \cos( \theta_1-\theta_0) \right)  - u_0 r_0^{2\zeta_0}  \\ 
	\dot \theta_j & = \gamma \left(  -\frac{c_{j-1} n_{j-1}}{n_{j}} \frac{r_{j-1}}{r_j}  \cos( \theta_{j-1}-\theta_j) \right. \\ & \quad  \left.
	 - \big( - \frac{c_{j-1} n_{j-1}}{n_{j}} - c_{j}\big)  
	- c_j \frac{r_{j+1}}{r_j}  \cos( \theta_{j+1}-\theta_j)  \right) - u_j r_j^{2\zeta_j}  \\ 
	\dot \theta_d & = \gamma \left(  -\frac{c_{d-1} n_{d-1}}{n_{d}} \frac{r_{d-1}}{r_d}  \cos( \theta_{d-1}-\theta_d) \right.  \\ & \quad \left.
	- \big( - \frac{c_{d-1} n_{d-1}}{n_{d}} - c_{d}\big) \right)  - u_d r_d^{2\zeta_d}  \, , 
	\end{align*}
	for \(j=1,2,..., d-1\). These equations are rather nasty, however, there are no summations and the number of differential equations has been reduced from \(2n\) to \(2(d+1)\). To maximise the probability of measuring the \(0\)'th state, choose the control to maximise the PMP Hamiltonian 
	\begin{align*}
	\mathcal{H} &= \lambda_j \Big( \gamma	\frac{c_{j-1} n_{j-1}}{n_{j}} r_{j-1}  \sin( \theta_{j-1}-\theta_j) \Big) \\
	& \quad + \gamma \Lambda_j  \Big(
	-\frac{c_{j-1} n_{j-1}}{n_{j}} \frac{r_{j-1}}{r_j}  \cos( \theta_{j-1}-\theta_j) \\ & \qquad \qquad
	- \Big( - \frac{c_{j-1} n_{j-1}}{n_{j}} - c_{j}\Big)  \\ & \qquad \qquad 
	- c_j \frac{r_{j+1}}{r_j}  \cos( \theta_{j+1}-\theta_j) - \frac{u_j}{\gamma} r_j^{2\zeta_j} 
	\Big)  \, ,
	\end{align*} 
	where \(r_{d+1} = 0\), \(c_{-1} = 0\) and \(j\) is summed from \(0\) to \(n-1\). The costates are defined by
	\begin{align*}
		- \frac{\dot{\lambda}_x}{\gamma}  = \frac{1}{\gamma} \frac{\partial \mathcal{H}}{\partial r_x} &= 
		\lambda_{x+1} \Big(	\frac{c_{x} n_{x}}{n_{x+1}}   \sin( \theta_{x}-\theta_{x+1}) \Big) \\ & \quad
		 + \Lambda_{x+1} \Big(
		-\frac{c_{x} n_{x}}{n_{x+1}} \frac{1}{r_{x+1}}  \cos( \theta_{x}-\theta_{x+1}) \Big) \\
		& \quad + \Lambda_x \Big(
		\frac{c_{x-1} n_{x-1}}{n_{x}} \frac{r_{x-1}}{r_x^2}  \cos( \theta_{x-1}-\theta_x) \Big)  \\ & \quad  +
		\Lambda_{x-1} \Big(  
		- c_{x-1} \frac{1}{r_{x-1}}  \cos( \theta_{x}-\theta_{x-1}) \big) \\
		& \quad
		+
		\Lambda_x \Big(  
		 c_x \frac{r_{x+1}}{r_x^2}  \cos( \theta_{x+1}-\theta_x) \big)  \\ & \quad +
		\Lambda_x \Big(
		-2 \zeta_x  \frac{u_x}{\gamma} r_x^{2 \zeta_x-1} \Big) \, ,
	\end{align*}
	and 
	\begin{align*}
		- \frac{\dot{\Lambda}_x}{\gamma}  = \frac{\partial \mathcal{H}}{\partial \theta_x} &=
		- \lambda_x \Big(	\frac{c_{x-1} n_{x-1}}{n_{x}} r_{x-1}  \cos( \theta_{x-1}-\theta_x) \Big)  \\ & \quad
		+ \lambda_{x+1} \Big(	\frac{c_{x} n_{x}}{n_{x+1}} r_{x}  \cos( \theta_{x}-\theta_{x+1}) \Big)
		\\ & \quad 
		+\Lambda_x \Big(
		-\frac{c_{x-1} n_{x-1}}{n_{x}} \frac{r_{x-1}}{r_x} \sin( \theta_{x-1}-\theta_x)  \Big)  \\ & \quad +
		\Lambda_{x+1} \Big(
		\frac{c_{x} n_{x}}{n_{x+1}} \frac{r_{x}}{r_{x+1}} \sin( \theta_{x}-\theta_{x+1}) \Big) 
		\\ & \quad +
		\Lambda_x 
		 \Big(  
		- c_x \frac{r_{x+1}}{r_x}  \sin( \theta_{x+1}-\theta_x) 
		\Big)  \\ & \quad +
		\Lambda_{x-1} 
		\Big(  
		 c_{x-1} \frac{r_{x}}{r_{x-1}}  \sin( \theta_{x}-\theta_{x-1}) 
		\Big) \, .
	\end{align*}
	The optimality condition is 
	\begin{align}
		\Lambda_i r_i^{2\zeta_i} = 0 \, ,
	\end{align}
	where \(i\) is summed from \(0\) to \(d\). This provides a single piece of information. 
\begin{theorem}\label{sum}
		The sum over costates of \(\theta \) is zero,
		\begin{align}\label{forpf2}
		\sum_{i=1}^{n} \Lambda_i = 0 \, .
		\end{align}
	\end{theorem}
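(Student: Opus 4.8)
The plan is to establish \eqref{forpf2} in direct parallel with the proof of Theorem~\ref{thrm:1}: first show that $\sum_i \Lambda_i$ is conserved by the costate flow, then fix its value to zero from the boundary data of the optimal control problem.

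For the conservation step, the clean route is to note that the PMP Hamiltonian $\mathcal{H}$ depends on the phases $\theta_0,\dots,\theta_d$ only through the nearest-shell differences $\theta_j-\theta_{j\pm1}$ — physically, the global phase is a gauge freedom — so $\mathcal{H}$ is invariant under the simultaneous shift $\theta_i\mapsto\theta_i+c$ and hence the directional derivative $\sum_i \partial\mathcal{H}/\partial\theta_i$ vanishes identically along any trajectory. Since the costate equations read $-\dot\Lambda_x/\gamma=\partial\mathcal{H}/\partial\theta_x$, summing over $x$ gives $\tfrac{d}{dt}\sum_i\Lambda_i=-\gamma\sum_i\partial\mathcal{H}/\partial\theta_i=0$, i.e. $\sum_i\Lambda_i$ is a constant of motion. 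As an explicit cross-check one can sum the displayed formulas for $\dot\Lambda_x$ term by term: after relabelling the summation index ($x\mapsto x\pm1$) and using $\sin(-\alpha)=-\sin(\alpha)$, the two $\cos$-type terms (those carrying $\lambda$) cancel against one another and the three $\sin$-type terms (those carrying $\Lambda$) cancel among themselves, with all out-of-range contributions at $x=0$ and $x=d$ annihilated by the stated conventions $c_{-1}=0$, $r_{d+1}=0$ (and $\lambda_{d+1}=\Lambda_{d+1}=\Lambda_{-1}=0$). This is the same cancellation mechanism that forced the right-hand side to vanish in the proof of Theorem~\ref{thrm:1}.

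To pin the constant, I would evaluate it at the free terminal time $t_f$ using the transversality conditions of Pontryagin's principle. Because the terminal payoff $r_0^2(t_f)$ is independent of every phase $\theta_i$ and the final phases are free, transversality forces $\Lambda_i(t_f)=0$ for all $i$; hence $\sum_i\Lambda_i(t_f)=0$, and conservation then yields $\sum_i\Lambda_i\equiv0$ for all time, which is \eqref{forpf2} (here the sum runs over the costates $\Lambda_0,\dots,\Lambda_d$ of the reduced system). One could present the symmetry argument as the main proof and the term-by-term cancellation as a remark.

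The anticipated difficulty is bookkeeping rather than conceptual: carrying out the index-shift cancellation consistently across all six terms of $\dot\Lambda_x$ at the two ends of the shell chain, and stating the transversality/terminal condition in a form compatible with the free-$t_f$, free-phase endpoint setup already used implicitly for the complete graph, where $\mathcal{H}(t_f)=0$ together with the condition that $\lambda_i(t_f)$ be supported on $i=0$ reproduced the stopping condition $\dot r_0(t_f)=0$.
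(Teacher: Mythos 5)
Your proposal is correct and structurally identical to the paper's proof: both establish that \(\sum_i \Lambda_i\) is a constant of motion under the costate equations and then invoke the transversality conditions (the terminal payoff is phase-independent, so \(\Lambda_i(t_f)=0\)) to fix that constant at zero. The only cosmetic difference is that you obtain the conservation step from invariance of \(\mathcal{H}\) under a global phase shift, whereas the paper performs the equivalent term-by-term cancellation explicitly, using the symmetry of the Laplacian and the antisymmetry of the sine terms; these are the same mechanism.
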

	\begin{proof}
		The derivative of the left hand side of equation \eqref{forpf2} is
		\begin{align*}
		- \sum_{i=1}^{n} \dot{\Lambda}_i & = \big( \lambda_j  r_i  - 
		\lambda_i   r_j   \big) \mathrm{L}_{ji} \cos( \theta_j-\theta_i)  \\ & \quad
		+
		\Big( \Lambda_j  \frac{r_i}{r_j} 
		+
		\Lambda_i  \frac{r_j}{r_i}   \Big) \mathrm{L}_{ji}  \sin( \theta_i-\theta_j) \\
		&=0 \, .
		\end{align*}
		Integrating this and substituting the transversality conditions for \(\Lambda_i\) gives equation \eqref{forpf2}.
	\end{proof}
    With the equation from Theorem \eqref{sum} and its derivative, along with the extrema condition, three costates can be found as functions of the other costates and states as long as the conditions are independent. Furthermore, only the difference in phase between adjacent shells are important, this can be used to eliminate one state. Furthermore these new conditions can be differentiated to find an additional four conditions on the costates. If these conditions are independent the costates can be determined in terms of the states and control when \(d=2\) or \(3\). In these cases, the control can be written in terms of the states and costates, hence the boundary value differential equations  becomes initial value differential equations which can be solved using a feedback loop. This can be done using a classical computer and there is a significant amount of research aimed at developing techniques to solve forward differential equations using feedback loops in quantum computation \cite{PhysRevLett.85.3045,PhysRevA.62.022108,grimsmo2015time,wang2015quantum}. 
    
	When \(d \geq 4\), we obtain a boundary value differential equation. This can be solved numerically. When the radial component of an unmarked state becomes zero, the phase loses all meaning and the derivative of the phase can easily grow to infinity. To avoid this, Cartesian coordinates are used to find a numerical solution. Furthermore a small amount of error when forward solving the DNLSE will grow extremely rapidly. To reduce this effect we use an adaptive step-size, Runge-Kutta (Radau IIA) method. The nonlinearity can be optimised using a discrete optimiser. The control is constructed from a cubic B-spline.
		\begin{figure}[H]
		\centering
			\includegraphics[width=.6\linewidth]{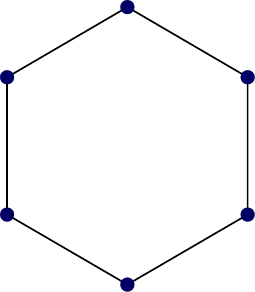}
		\caption{An illustration of the circular graph with \(6\) nodes.}\label{fig:ppic1}
	\end{figure}
	Consider the circular graph of six nodes in figure \eqref{fig:ppic1}. Performing a reduction, this becomes the four node system defined by
	\begin{align*}
	\dot r_0 &= 2 \gamma r_1  \sin( \theta_1-\theta_0)\\
	\dot r_1 &= \gamma r_{0}  \sin( \theta_{0}-\theta_1) 
	+ \gamma r_{2}  \sin( \theta_{2}-\theta_1)\\
	\dot r_2 &= \gamma r_{1}  \sin( \theta_{1}-\theta_2) 
	+ \gamma r_{3}  \sin( \theta_{3}-\theta_2)\\
	\dot r_3 &= 2 \gamma r_{2}  \sin( \theta_{2}-\theta_3)\\
	\dot \theta_0 &= 2 \gamma - 2 \gamma \frac{r_1}{r_0}  \cos( \theta_1-\theta_0) - u r_0^{\zeta_0}\\
	\dot \theta_1 &= - \gamma \frac{r_{0}}{r_1}  \cos( \theta_{0}-\theta_1)  
	+2 \gamma
	- \gamma \frac{r_{2}}{r_1}  \cos( \theta_{2}-\theta_1) - u_1 r_1^{\zeta_1}\\ 
	\dot \theta_2 &= - \gamma \frac{r_{1}}{r_2}  \cos( \theta_{1}-\theta_2)  
	+2 \gamma
	- \gamma \frac{r_{3}}{r_2}  \cos( \theta_{3}-\theta_2) - u_2 r_2^{\zeta_2}\\ 
	\dot \theta_3 &= -2 \gamma \frac{r_{2}}{r_3}  \cos( \theta_{2}-\theta_3)   
	+3 \gamma  - u_3 r_3^{\zeta_3}  \, .
	\end{align*}
	 For convenience we use a nonlinearity \(\zeta_*=1\) on the unmarked states and \(\zeta=2\) on the marked state. The control is described by a finite number of elements by using a cubic B-spline with \(5\) control points. The control points are forced to have magnitude less than \(20\) to ensure the magnitude of the control is always less than \(20\). In practice this bound would be replaced with the physical limitations of the apparatus.
     
	Only the phase differences are important so set \(\theta_1(0)=0\), the remaining initial phases are parameters to be chosen by the numerical optimisation. The solution with the highest probability takes a total time of \(7.70\) seconds and converges with a probability of \(0.98\) to measure the marked state.
	
	After \(1.43\) seconds the first peak of \(r_0^2\), has a height of \(0.95\). This solution is far more practical because it converges almost eight times quicker than the previous solution.

\section*{Summary}
When the entanglement of a quantum system is represented by the DNLSE with a complete graph, we  determine an explicit algorithm to determine the optimal time dependent nonlinearity. The resulting search protocol has runtime \(\mathcal{O}((N^{\perp} - N) / (g \sqrt{N N^{\perp}}) )\) for \(N^{\perp} > N\) and for \(N^{\perp} \leq N\), the runtime is \(\mathcal{O}( 1 )\). This protocol scales equally with Grover's search and can be implemented on a linear or nonlinear quantum computer. Furthermore as the number of states increase the error resulting from measurement decreases.

For a symmetric graph with diameter two or three the resulting boundary value problem can be reduced to an initial value problem. However, for larger diameters, maximising the probability of marked states becomes more complex as it is no longer optimal to set the phase difference between nodes to \(\pi/2\). We develop a direct numerical package to maximise the probability of the marked states subject to the discrete nonlinear Schr\"odinger equation and initial conditions.
	\bibliography{general}

\end{document}